\documentclass[a4paper,fleqn]{cas-sc}
\usepackage[utf8]{inputenc}
\usepackage[numbers]{natbib}
\usepackage{amsthm,dsfont}
\definecolor{Maroon}{RGB}{128,0,0}
\definecolor{MidnightBlue}{RGB}{25,25,112}
\hypersetup{hidelinks}
\usepackage[capitalise,noabbrev]{cleveref}
\newtheorem{theorem}{Theorem}

\newcommand{\Wone}{\ensuremath{\mathtt{W}[1]}\xspace}
\newcommand{\MCclique}{\textup{\textsc{Multicolored Clique}}\xspace}
\newcommand{\bigparagraph}[1]{\par\medskip\noindent\textbf{#1}}

\newenvironment{tightcenter}
 {\parskip=0pt\par\nopagebreak\centering}
 {\par\noindent\ignorespacesafterend}

\usepackage{tikz}
\usetikzlibrary{calc}
\usepackage{xargs}
\usepackage{xifthen}
\usepackage{framed}

\usepackage{ctable}

\newlength{\RoundedBoxWidth}
\newsavebox{\GrayRoundedBox}
\newenvironment{GrayBox}[1]%
   {\setlength{\RoundedBoxWidth}{\textwidth-10.5ex}
    \def\boxheading{#1}
    \begin{lrbox}{\GrayRoundedBox}
       \begin{minipage}{\RoundedBoxWidth}%
   }{%
       \end{minipage}
    \end{lrbox}%
    \begin{tightcenter}%
    \begin{tikzpicture}%
       \node(Text)[draw=black!90,fill=white,rounded corners,%
             inner sep=2ex,text width=\RoundedBoxWidth]%
             {\usebox{\GrayRoundedBox}};
        \coordinate(x) at (current bounding box.north west);
        \node [draw=white,rectangle,inner sep=3pt,anchor=north west,fill=white] 
        at ($(x)+(6pt,.75em)$) {\boxheading};
    \end{tikzpicture}
    \end{tightcenter}\vspace{0pt}%
    \ignorespacesafterend
}

\begin{document}
\let\WriteBookmarks\relax
\def\floatpagepagefraction{1}
\def\textpagefraction{.001}
\shorttitle{Parameterized Hardness of Mixed 2-Sided Orthant Depth}
\shortauthors{M. D\"oring and G. Tennigkeit}
\title[mode=title]{Parameterized Hardness of Mixed 2-Sided Orthant Depth}
\author[1]{Michelle D\"oring}[orcid=0000-0001-7737-3903]
\cormark[1]
\ead{michelle.doering@hpi.de}
\ead[url]{https://michelledoering.notion.site/}
\author[1]{Georg Tennigkeit}[orcid=0000-0003-0734-0684]
\ead{georg.tennigkeit@hpi.de}
\affiliation[1]{organization={Hasso Plattner Institute, University of Potsdam},
    city={Potsdam}, country={Germany}}
\cortext[1]{Corresponding author.}
\begin{abstract}
We consider the maximum-depth problem for mixed 2-sided orthants in $\mathbb R^d$: each region imposes one lower bound and one upper bound on distinct coordinates, and the task is to find a point contained in as many regions as possible.
We show that the corresponding decision problem is \Wone-hard when parameterized by the dimension.
Our reduction from \MCclique uses two coordinates per color class and only polynomially many orthants.
\end{abstract}

\begin{keywords}
Computational geometry \sep Parameterized complexity \sep Orthant depth \sep W[1]-hardness
\end{keywords}

\maketitle

\section{Introduction}
In the general \textup{\textsc{Box Depth}} problem, we are given a set of $n$ axis-aligned boxes in $\mathbb{R}^d$ together with an integer~$\ell$, and the task is to decide whether there is a point $x\in\mathbb{R}^d$ that is contained in at least $\ell$ boxes. A \emph{2-sided orthant} is an axis-aligned region that is unbounded in $d-2$ dimensions and bounded from one side in each of two distinct dimensions; that is, it is of the form $\{x\in\mathbb{R}^d : (x_i\,{\mathord ?}\,a)\land(x_j\,{\mathord ?}\,b)\}$ for distinct $i,j\in[d]$, where each occurrence of ${\mathord ?}$ may independently be $\leq$ or $\geq$~\cite{chan_kleesmeasure_2013}. We consider the further restricted \textup{\textsc{Mixed 2-Sided Orthant Depth (M2SOD)}} problem, in which every region has one upper-bounded and one lower-bounded coordinate. Thus, every region is of the form $\{x\in\mathbb{R}^d : x_i\leq a,\ x_j\geq b\}$ for some distinct $i,j\in[d]$.

The maximum-depth problem is closely related to Klee’s measure problem, which asks for the volume of the union of axis-aligned boxes. Algorithms for the latter can also be adapted to compute maximum depth; for general boxes, the resulting running times have an exponent approximately $d/2$ for fixed dimension $d$ \cite{chan_kleesmeasure_2013}. The dependence on the dimension is supported by conditional lower bounds. In particular, Gorbachev and Künnemann \cite{gorbachev_combinatorialdesigns_2023} establish lower bounds for unweighted box depth under the 3-uniform hyperclique hypothesis.

Restricting the geometry of the input regions can substantially improve algorithms for related problems. In his work on Klee’s measure problem \cite{chan_kleesmeasure_2013}, Chan develops a simplification procedure for 2-sided orthants, while explicitly noting that the resulting techniques do not apply to maximum depth.
Two-sided orthants also play a central role in algorithms for largest empty boxes \cite{chan_fasteralgorithms_2023}. These results motivate the question whether restricting each region to only two coordinate bounds makes maximum depth fixed-parameter tractable with respect to the dimension. With only one bound per region, the problem is polynomial-time solvable: the depth separates into independent one-dimensional functions, each of which can be maximized by sorting and scanning its thresholds.

We show that allowing two bounds per region already makes \textsc{Box Depth} W[1]-hard parameterized by the dimension, even when every region has exactly one lower bound and one upper bound on distinct coordinates and all bounds are integers. Our reduction from Multicolored Clique uses two coordinates per color class and polynomially many explicitly listed orthants with integer bounds.

\section{Result}
Formally, an instance of \textup{\textsc{Mixed 2-Sided Ortant Depth (M2SOD)}} consists of a dimension $d\geq2$, an explicitly listed indexed family $\mathcal O=(O_1,\ldots,O_N)$ of mixed 2-sided orthants with rational bounds encoded in binary, and a nonnegative integer threshold $\ell$ encoded in binary. The question is whether there exists $z\in\mathbb R^d$ such that $\operatorname{depth}_{\mathcal O}(z):=\bigl|\{r\in[N]:z\in O_r\}\bigr|\geq\ell$. 

Note that we allow repeated regions: the input is an indexed family of orthants, and depth counts each occurrence separately. The multiplicities used in our reduction are implemented by explicitly listing polynomially many copies. An equivalent construction with unique orthants would be possible by slightly offsetting the bound values in duplicate orthants; with integer bounds, all values can be scaled up accordingly to make space for such small offsets.

We write $[n]=\{1,\ldots,n\}$. Throughout, all bounds are nonstrict and all regions are closed. For ease of reading, we give names to dimensions instead of enumerating them with integers, and denote dimension $d$ of a point $z$ as $z[d]$.

\bigparagraph{Intuition.} We reduce from \textup{\textsc{Multicolored Clique}}.
For each color class $i$, we use two dimensions $x_i,y_i$ to encode the choice of a vertex in this class. Specifically, $z[x_i]=s=-z[y_i]$ will indicate choosing the vertex with index $s$ in color class $i$. We will call every point that satisfies $z[x_i]=s=-z[y_i]$ for every dimension $i$ a \textit{valid choice}.
The two coordinates let us express a bound on the selected index using either coordinate: for a point $z$ representing a valid choice, $z[x_i]\leq a$ is equivalent to $z[y_i]\geq-a$, and $z[x_i]\geq a$ is equivalent to $z[y_i]\leq-a$.
To ensure that every solution is a valid choice we introduce \textit{choice orthants} with sufficiently large multiplicity. 
For each edge of the input graph, four further orthants contribute a baseline of one at every valid choice, and a contribution of two precisely when the two endpoints of that edge are selected.
Reaching the target depth will then require an edge between every pair of selected vertices.
The construction uses only $2k$ dimensions for $k$ color classes, and every orthant has one lower and one upper bound on distinct coordinates.

\begin{theorem}\label{thm:M2SOD-wone-hard}
    \textup{\textsc{Mixed 2-Sided Orthant Depth}} is \Wone-hard when parameterized by the dimension.
\end{theorem}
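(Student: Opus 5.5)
The plan is a parameterized reduction from \MCclique that follows the intuition above. Take an instance with color classes $V_1,\dots,V_k$, where each $V_i=\{v_{i,1},\dots,v_{i,n}\}$ is indexed by $[n]$, and let $m$ be the number of edges between distinct classes. Edges inside a class can be discarded. We use the $2k$ dimensions $x_i,y_i$ for $i\in[k]$, so the new parameter is $d=2k$. Every bound will be an integer in $[-n-1,n+1]$, and the construction will be polynomial. For a point $z$ we write $a_i:=z[x_i]$ and $b_i:=-z[y_i]$.

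\textbf{Choice gadget.} For each class $i$ and each $s\in[n]$ we take the two mixed orthants
\[
P_{i,s}=\{x_i\geq s,\ y_i\leq -s\},\qquad Q_{i,s}=\{x_i\leq s,\ y_i\geq -s\}.
\]
A point lies in $P_{i,s}$ iff $s\leq\min(a_i,b_i)$, and in $Q_{i,s}$ iff $s\geq\max(a_i,b_i)$. The gadget's depth is therefore the number of $s\in[n]$ with $s\le\min(a_i,b_i)$ plus the number with $s\ge\max(a_i,b_i)$. A short case check gives the following. This count equals $n+1$ exactly when $a_i=b_i\in[n]$ is an integer. It is at most $n$ in every other case: when $a_i\neq b_i$, when the common value is non-integral, or when it lies outside $[1,n]$. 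Each $P_{i,s}$ and $Q_{i,s}$ is listed with multiplicity $M:=2m+1$.

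\textbf{Edge gadget.} For an edge $v_{i,s}v_{j,t}$ with $i<j$ we add the four mixed orthants
\[
\{x_i\le s,\ x_j\ge t\},\quad \{x_i\ge s,\ x_j\le t\},\quad \{y_i\le -s-1,\ x_j\ge t+1\},\quad \{x_i\le s-1,\ y_j\ge -t+1\}.
\]
At a valid choice these express the indicators $[a_i\le s][a_j\ge t]$, $[a_i\ge s][a_j\le t]$, $[a_i\ge s+1][a_j\ge t+1]$ and $[a_i\le s-1][a_j\le t-1]$. Checking the nine positions of $(a_i,a_j)$ relative to $(s,t)$ (each coordinate $<$, $=$ or $>$) shows that their sum is exactly $1+[a_i=s][a_j=t]$. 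For instance, at $a_i=s,\ a_j>t$ only the first is active, and at $(s,t)$ exactly the first two are active. At an arbitrary point every edge gadget still contributes at most $2$ (the first two orthants overlap only where $a_i=s,a_j=t$, the last two only where $a_j\leq t-1$ and $a_j\geq t+1$, impossible, etc.; a crude bound of $4$ also suffices with $M:=4m+1$, which I would use to avoid the check).

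\textbf{Threshold and correctness.} Set $\ell:=kM(n+1)+m+\binom k2$. Suppose the instance has a multicolored clique $v_{1,s_1},\dots,v_{k,s_k}$. The valid choice $a_i=b_i=s_i$ then achieves depth $kM(n+1)$ from the choice gadgets. Every edge contributes its baseline $1$, and the $\binom k2$ clique edges contribute one extra each, so the total is exactly $\ell$. Conversely, suppose $\operatorname{depth}(z)\ge\ell$. If some choice gadget is not maximized, the depth is at most $kM(n+1)-M+4m<\ell$. Hence $z$ is a valid choice $(s_1,\dots,s_k)$, and its depth is $kM(n+1)+m+e$, where $e$ counts the edges $v_{i,s_i}v_{j,s_j}$ present in the graph. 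Since $e\le\binom k2$, reaching $\ell$ forces all pairs to be adjacent, so the selected vertices form a multicolored clique.

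The main obstacle is designing the four-orthant edge gadget, where every orthant must be mixed and the pattern must realize exactly $1+[\text{edge selected}]$. The device that makes this possible is the redundant coordinate $y_i$, which turns the bound $a_i\ge s+1$ into the upper bound $y_i\le -s-1$. Everything else is routine case checking and bounding.
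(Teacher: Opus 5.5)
Your reduction is correct and is essentially the paper's: the same choice orthants with multiplicity $M=4m+1$, the same threshold, and the same four edge orthants. The only difference is your second edge orthant, $\{x_i\geq s,\ x_j\leq t\}$, which replaces the paper's $\{y_i\leq -s,\ y_j\geq -t\}$ and coincides with it on valid choices.

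Your parenthetical claim that an edge gadget contributes at most $2$ at an arbitrary point is false. The point with $z[x_i]=s-1$, $z[y_i]=-s-1$, $z[x_j]=t+1$, $z[y_j]=-t+1$ lies in three of its orthants, because the last two constrain $a_j$ and $b_j$ rather than $a_j$ twice. This is harmless, since you, like the paper, use the crude bound $4$ with $M=4m+1$.
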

\begin{proof}
    We give a parameterized reduction from \MCclique, which is \Wone-hard when parameterized by the number of color classes~\cite{fellows_parameterizedcomplexity_2009}. Here, the task is to choose one vertex from each color class so that the selected vertices form a clique. 

    Let $(H=(V_H,E_H), (V_1,\dots,V_k))$ be an instance of \MCclique for a simple undirected graph $H$. 
    Instances with $k<2$ can be decided directly, so we assume $k\geq2$. We additionally assume that every color class has the same size $n\geq1$, and  add isolated vertices where necessary.
    Write $V_i=\{v_{i,1},\ldots,v_{i,n}\}$ and, for $i<j$, let $E_{ij}$ denote the set of edges between $V_i$ and $V_j$.

    We construct an instance $(\mathcal O,\ell)$ of \textup{\textsc{M2SOD}} in dimension $d=2k$. For each $i\in[k]$, introduce two dimensions $x_i$ and $y_i$. A point $z\in\mathbb{R}^{2k}$ is a \emph{valid choice} if for every $i\in[k]$ there is an $\alpha_i\in[n]$ with $z[x_i]=\alpha_i$ and $z[y_i]=-\alpha_i$. Thus, a valid choice encodes the selection of one unique vertex $v_{i,\alpha_i}$ from each color class.

    The family of orthants $\mathcal O$ is composed as follows. For every vertex $v_{i,\alpha}$ with $i\in[k]$ and $\alpha\in[n]$, introduce two mixed 2-sided \textit{choice orthants}:\begin{align*}
        U_{i,\alpha}=\{z\in\mathbb{R}^{2k}:z[x_i]\geq \alpha,\ z[y_i]\leq -\alpha\},
        \quad\quad
        L_{i,\alpha}=\{z\in\mathbb{R}^{2k}:z[x_i]\leq \alpha,\ z[y_i]\geq -\alpha\}.
    \end{align*}
    Each of these orthants will be added with multiplicity $M$, defined below. See \Cref{fig:orthants1} for an illustration of the choice orthants for a single vertex $v_{i,\alpha}$.
    \begin{figure}[pos=t]
        \centering
        \includegraphics[width=0.45\linewidth]{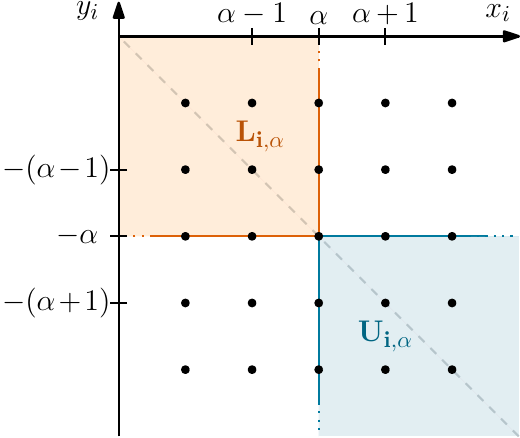}
        \caption{Illustration of the choice orthants $L_{i,\alpha}$ and $U_{i,\alpha}$ for a single color class $V_i$. The valid choices are the points $z$ with $z[x_i]=-z[y_i]\land z[x_i]\in [n]$ on the diagonal line (dashed gray).}
        \label{fig:orthants1}
    \end{figure}

    For every edge $v_{i,\alpha}v_{j,\beta}\in E_{ij}$ with $i,j\in[k], i<j$, introduce four mixed 2-sided \textit{edge orthants}:
    \begin{align*}
        &D_1=\{z:z[x_i]\leq \alpha,\ z[x_j]\geq \beta\},&&
        D_2=\{z:z[y_i]\leq -\alpha,\ z[y_j]\geq -\beta\},\\
        &D_1'=\{z:z[x_i]\leq \alpha-1,\ z[y_j]\geq -(\beta-1)\},&&
        D_2'=\{z:z[y_i]\leq -(\alpha+1),\ z[x_j]\geq \beta+1\}.
    \end{align*}
    See \Cref{fig:orthants2} for an illustration of the four orthants associated with an edge $v_{i,\alpha}v_{j,\beta}$.
    \begin{figure}[pos=t]
        \centering
        \includegraphics[width=0.45\linewidth]{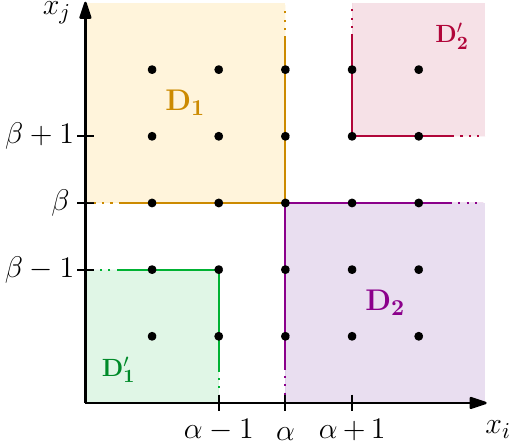}
        \caption{Illustration of the edge orthants associated with an edge $v_{i,\alpha}v_{j,\beta}$. Orthants are shown along the dimensions $x_i$ and $x_j$; assuming only valid choices, we represent a bound along $y_i$ (resp. $y_j$) via the inverted bound on $x_i$ (resp. $x_j$).
        }
        \label{fig:orthants2}
    \end{figure}

    The total number of edge orthants is $4\lvert E_H\rvert$, and we set $M:=4\lvert E_H\rvert+1$.
    Finally, we set the target to $\ell=kM(n+1)+\lvert E_H\rvert+\binom{k}{2}$. This completes the construction.

    {Consider any point $z\in \mathbb R^d$.} Among the choice orthants $U_{i,1},\ldots,U_{i,n}$ for some $i\in [k]$, precisely the ones with index at most ${\min\{z[x_i],-z[y_i]\}}$ contain $z$. Analogously, among $L_{i,1},\ldots,L_{i,n}$, precisely those with index at least ${\max\{z[x_i],-z[y_i]\}}$ contain $z$. Hence, $z$ is contained in at most $n+1$ of these $2n$ orthants. This maximum of $n+1$ orthants is reached by $z$ if and only if ${z[x_i]=-z[y_i]=\alpha_i}$ for some $\alpha_i\in[n]$, that is, if $z$ encodes a valid choice for color $i$.

    Now fix an edge $v_{i,\alpha}v_{j,\beta}\in E_{ij}$ and let $z\in\mathbb R^d$ be a valid choice, selecting $v_{i,{s}}\in V_i$ and $v_{j,{t}}\in V_j$. The four edge orthants $D_1,D_2,D_1',D_2'$ associated with $v_{i,\alpha}v_{j,\beta}$ contain $z$ under the following conditions:
    \begingroup\interdisplaylinepenalty=10000
    \begin{align*}
        z\in D_1 &\quad\Longleftrightarrow\quad {s}\leq\alpha \text{ and } {t}\geq\beta, &
        z\in D_2 &\quad\Longleftrightarrow\quad {s}\geq\alpha \text{ and } {t}\leq\beta,\\
        z\in D_1' &\quad\Longleftrightarrow\quad {s}<\alpha \text{ and } {t}<\beta, &
        z\in D_2' &\quad\Longleftrightarrow\quad {s}>\alpha \text{ and } {t}>\beta.
    \end{align*}\endgroup
    Thus, $D_1,D_2,D_1',D_2'$ contain every pair $({s},{t})\in[n]^2$ exactly once, except $(\alpha,\beta)$, which is contained in both $D_1$ and $D_2$.
    Consequently, the four edge orthants associated with $v_{i,\alpha}v_{j,\beta}$ contribute $2$ if ${s}=\alpha$ and ${t}=\beta$, and $1$ otherwise, to the depth of $z$.
    It follows that the edge orthants associated with $E_{ij}$ contribute $\lvert E_{ij}\rvert+1$ if $v_{i,{s}}v_{j,{t}}\in E_{ij}$, and $\lvert E_{ij}\rvert$ otherwise.

    It remains to prove that $H$ has a multicolored clique if and only if there is a point contained in at least $\ell$ orthants.

    \bigparagraph{$(\Rightarrow)$}\quad
    Assume $\{v_{1,\alpha_1},\ldots,v_{k,\alpha_k}\}$ is a multicolored clique in $H$, and consider the valid choice $z$ with \[
    \forall i\in [k]:z[x_i]=\alpha_i\land z[y_i]=-\alpha_i
    \]
    Then $z$ is contained in exactly $kM(n+1)$ choice orthants. Moreover, every pair $i<j$ contributes $\lvert E_{ij}\rvert+1$ edge orthants, since $v_{i,\alpha_i}v_{j,\alpha_j}\in E_H$. Hence, the total depth of $z$ is $kM(n+1)+\sum_{i<j}(\lvert E_{ij}\rvert+1)=kM(n+1)+\lvert E_H\rvert+\binom{k}{2}=\ell$.

    \bigparagraph{$(\Leftarrow)$}\quad
    Assume there is a point $z\in\mathbb R^d$ with depth at least $\ell$. 
    If $z$ does not encode a valid choice for some color $i\in[k]$, then the choice orthants can contribute at most $kM(n+1)-M$ in total. Since all edge orthants together contribute at most $4\lvert E_H\rvert$, the total depth of $z$ is at most $kM(n+1)-M+4\lvert E_H\rvert=kM(n+1)-1<\ell$, which is a contradiction.
    Hence $z$ must be a valid choice, selecting $v_{i,\alpha_i}\in V_i$ for $i\in[k]$.
    By the above observation, its depth equals $kM(n+1)+\lvert E_H\rvert+\sum_{i<j}\mathds{1}_{v_{i,\alpha_i}v_{j,\alpha_j}\in E_H}$.
    Reaching $\ell$ forces every one of the $\binom{k}{2}$ selected vertex pairs to be adjacent. Thus, the selected vertices form a multicolored clique in $H$.

    The construction has dimension $d=2k$ and produces exactly $N=2kn(4|E_H|+1)+4|E_H|$ orthants. Every orthant uses two distinct coordinates, one lower bound and one upper bound, where all bounds use integers in $[-(n+1),n+1]$. Thus the construction has polynomial size. Since $d=2k$ and \MCclique is \Wone-hard by $k$, this proves \Wone-hardness of \textup{\textsc{M2SOD}} by $d$.
\end{proof}

\section*{Acknowledgments and funding}
Michelle D\"oring and Georg Tennigkeit are supported by the HPI Research School on Foundations of AI (FAI).

\section*{Declaration of competing interest}
The authors declare that they have no known competing financial interests or personal relationships that could have appeared to influence the work reported in this paper.

\section*{Data availability}
No datasets were generated or analyzed in this theoretical study.

\section*{Declaration of generative AI and AI-assisted technologies in the manuscript preparation process}
During the preparation of this manuscript, the authors used OpenAI's Codex to assist with LaTeX formatting, bibliography integration, editorial suggestions, and drafting clarifications of mathematical definitions and reduction-size bounds. 

The proof idea was developed by the authors without the use of AI tools and the paper was also written without AI assistance (apart from the AI-assisted text formatting). 
The authors take full responsibility for the content of the manuscript.

\bibliographystyle{cas-model2-names}
\bibliography{MyLibrary}

@article{chan_fasteralgorithms_2023,
  title = {Faster Algorithms for Largest Empty Rectangles and Boxes},
  author = {Chan, Timothy M.},
  year = 2023,
  month = sep,
  journal = {Discrete \& Computational Geometry},
  volume = {70},
  number = {2},
  pages = {355--375},
  issn = {0179-5376, 1432-0444},
  doi = {10.1007/s00454-022-00473-x},
  urldate = {2026-09-22},
  langid = {english}
}

@inproceedings{chan_kleesmeasure_2013,
  title = {Klee's Measure Problem Made Easy},
  booktitle = {2013 {{IEEE}} 54th {{Annual Symposium}} on {{Foundations}} of {{Computer Science}}},
  author = {Chan, Timothy M.},
  year = 2013,
  month = oct,
  pages = {410--419},
  publisher = {IEEE},
  address = {Berkeley, CA, USA},
  doi = {10.1109/FOCS.2013.51},
  urldate = {2026-09-22},
  isbn = {978-0-7695-5135-7},
  langid = {english}
}

@article{fellows_parameterizedcomplexity_2009,
  title = {On the Parameterized Complexity of Multiple-Interval Graph Problems},
  author = {Fellows, Michael R. and Hermelin, Danny and Rosamond, Frances and Vialette, St{\'e}phane},
  year = 2009,
  month = jan,
  journal = {Theoretical Computer Science},
  volume = {410},
  number = {1},
  pages = {53--61},
  issn = {0304-3975},
  doi = {10.1016/j.tcs.2008.09.065},
  langid = {english}
}

@article{gorbachev_combinatorialdesigns_2023,
  title = {Combinatorial Designs Meet Hypercliques: Higher Lower Bounds for Klee's Measure Problem and Related Problems in Dimensions d {$\geq$} 4},
  shorttitle = {Combinatorial Designs Meet Hypercliques},
  author = {Gorbachev, Egor and K{\"u}nnemann, Marvin},
  editor = {Chambers, Erin W. and Gudmundsson, Joachim},
  year = 2023,
  journal = {LIPIcs, Volume 258, SoCG 2023},
  volume = {258},
  pages = {36:1-36:14},
  publisher = {Schloss Dagstuhl -- Leibniz-Zentrum f\"ur Informatik},
  issn = {1868-8969},
  doi = {10.4230/LIPICS.SOCG.2023.36},
  urldate = {2026-09-30},
  copyright = {Creative Commons Attribution 4.0 International license, info:eu-repo/semantics/openAccess},
  isbn = {9783959772730},
  langid = {english}
}
\end{document}